\documentclass[12pt,reqno]{amsart}
\usepackage{amsmath,amsfonts,amsthm,amssymb}
\newcommand\version{May 6, 2026}
\usepackage{amsxtra}

\newtheorem{theorem}{Theorem}

\newtheorem{lemma}{Lemma}

\theoremstyle{definition}

\newtheorem{example}{Example}

\theoremstyle{remark}

\newcommand{\C}{\mathbb{C}}

\newcommand{\eps}{\epsilon}
\renewcommand{\epsilon}{\varepsilon}

\newcommand{\R}{\mathbb{R}}

\DeclareMathOperator{\dom}{dom}
\DeclareMathOperator{\ran}{ran}

\begin{document}

\title{Arbitrary harmonic functions as Bose--Einstein condensates}

\author{Michiel De Wilde}
\address{Michiel De Wilde, Institute of Science and Technology Austria, Am Campus 1, 3400 Klosterneuburg, Austria}
\email{Michiel.DeWilde@ist.ac.at}

\author{Robert Seiringer}
\address{Robert Seiringer, Institute of Science and Technology Austria, Am Campus 1, 3400 Klosterneuburg, Austria}
\email{Robert.Seiringer@ist.ac.at}

\date\version

\thanks{\copyright\, 2026 by the authors. This paper may be  
reproduced, in
its entirety, for non-commercial purposes.}

\begin{abstract} 
We show that a suitable choice of boundary conditions for the Laplacian allows for the appearance of an arbitrary number of condensates, described by arbitrary harmonic functions, in the thermodynamic limit of an ideal Bose gas.
\end{abstract}


\maketitle


\section{Introduction and Main Result}

We shall investigate the question of what kind of condensates one can obtain for an ideal Bose gas in the thermodynamic limit if one is allowed to freely choose the boundary conditions for the Laplacian on the boundary of the domains. The question is motivated by the observation in \cite[Example~5.3.2]{BR} that ideal Bose gases in infinite volume allow for many different KMS states, with an arbitrary positive sesquilinear form that is invariant under the time evolution (separately in the arguments)  appearing in the two-point function, corresponding to harmonic functions describing the condensate(s). This leaves open the question whether such states can be obtained via the usual procedure of taking the thermodynamic limit of Gibbs states of systems in finite volume. 
Related to that, it was observed in \cite[Section~VI]{BY} that one can construct non-translation-invariant KMS states in the thermodynamic limit in such a way that the condensate is described by a linear function.

The following theorem shows that it is possible to choose boundary conditions for the Laplacian such that the corresponding system of an ideal Bose gas at zero chemical potential has, in the thermodynamic limit, any number of Bose condensates described by arbitrary harmonic functions.

\begin{theorem}\label{thm:main}
Consider a sequence of open, bounded domains $\Omega_j \subset \R^d$, with smooth boundary $\partial \Omega_j$, converging to $\R^d$ as $j\to \infty$, in the sense that any bounded set is contained in $\Omega_j$ for $j$ large enough. Given a (possibly infinite) sequence $\Phi = \{\phi_k\}_{k}$ of harmonic functions $\phi_k : \R^d \to \C$, with $\sum_k |\phi_k(x)|^2 < \infty$ for all $x\in\R^d$, there exist positive self-adjoint  realizations $-\Delta^{\Omega_j}_\Phi$ of $-\Delta$ on $\Omega_j$ such that 
\begin{equation}\label{eq:main}
\lim_{j \to \infty} \left\langle f \left| \frac 1{e^{-\beta \Delta_\Phi^{\Omega_j}} - 1} \right. g\right\rangle_{L^2(\Omega_j)}  = \int_{\R^d} \frac{ \overline{\hat f(p)} \hat g(p)}{e^{\beta |p|^2} -1} {dp}+ \frac 1\beta  \sum_{k}  \int_{\R^d} \overline { f}  \phi_k  \int_{\R^d}  \overline{\phi_k} g
\end{equation}
for any $\beta>0$ and any $f,g \in L^2(\R^d)$ with compact support. In dimensions $d\in\{1,2\}$, we need the additional assumption $\hat f(0) = \hat g(0) = 0$ for \eqref{eq:main} to hold.
\end{theorem}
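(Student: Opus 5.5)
The plan is to build $-\Delta^{\Omega_j}_\Phi$ with the Krein--Vishik--Birman theory of self-adjoint extensions of the minimal Laplacian. The point is that this theory allows the inverse of the extension to differ from the Dirichlet inverse by a prescribed positive operator acting on harmonic functions, and that is exactly what produces the condensate term.

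\textbf{Construction.} Let $S_j=-\Delta$ with domain $H^2_0(\Omega_j)$. Since $\Omega_j$ is bounded, $S_j\ge c_j>0$, its Friedrichs extension is the Dirichlet Laplacian $A_{D,j}$, and $\ker S_j^*$ is the space $\mathcal H_j$ of $L^2(\Omega_j)$ harmonic functions. Birman's parametrization says the following. For every nonnegative self-adjoint operator $B$ acting in a closed subspace of $\mathcal H_j$, there is a positive self-adjoint extension $A_B$ of $S_j$ with
\begin{equation*}
A_B^{-1}=A_{D,j}^{-1}+B^{-1},
\end{equation*}
where $B^{-1}$ is extended by $0$ to the orthogonal complement of that subspace.

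Fix $n_j\to\infty$ (to be chosen slowly) and let $\chi_{k,j}$ be the restriction of $\phi_k$ to $\Omega_j$, for $k\le n_j$. These restrictions lie in $\mathcal H_j$ because $\partial\Omega_j$ is smooth and $\phi_k$ is smooth on $\overline{\Omega_j}$. Set $C_j=\sum_{k\le n_j}|\chi_{k,j}\rangle\langle\chi_{k,j}|\ge0$. It is a finite-rank operator on $\mathcal H_j$, and I take $B=C_j^{-1}$ on $\overline{\ran C_j}$. I then define $-\Delta^{\Omega_j}_\Phi:=A_{B}$, which is positive, self-adjoint, and extends $-\Delta$ on $C_c^\infty(\Omega_j)$.

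\textbf{Splitting.} Write $(e^{\beta x}-1)^{-1}=(\beta x)^{-1}+h(\beta x)$, where $h(x)=(e^x-1)^{-1}-x^{-1}$ is bounded and continuous on $[0,\infty)$, with $h(0)=-1/2$ and $h(\infty)=0$. For compactly supported $f,g$ and $j$ large, the inverse part is
\begin{equation*}
\tfrac1\beta\langle f|A_B^{-1}g\rangle=\tfrac1\beta\langle f|A_{D,j}^{-1}g\rangle+\tfrac1\beta\sum_{k\le n_j}\int\overline f\phi_k\int\overline{\phi_k}g .
\end{equation*}
The sum converges to the full condensate term in \eqref{eq:main}. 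Indeed, by the mean value property, Cauchy--Schwarz and the assumption $\sum_k|\phi_k(x)|^2<\infty$, the quantity $\sum_k|\phi_k|^2$ is a locally bounded function: average it over a ball and use monotone convergence plus subharmonicity. Hence $\sum_k|\int\overline f\phi_k|^2<\infty$. For the Dirichlet term, I use the standard fact that $\langle f|A_{D,j}^{-1}g\rangle\to\int \overline{\hat f}\hat g\,|p|^{-2}dp$ as the domains exhaust $\R^d$. One way to see this is to compare the Dirichlet Green's function with the free one, which is monotone in the domain. This holds when $d\ge3$; when $d\in\{1,2\}$ one needs $\hat f(0)=\hat g(0)=0$ so that the limit is finite and the infrared part converges. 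Adding the free $\int\overline{\hat f}\hat g\, h(\beta|p|^2)dp$ to $\beta^{-1}\int\overline{\hat f}\hat g|p|^{-2}$ gives the first term of \eqref{eq:main}.

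\textbf{Main obstacle.} It remains to show that $\langle f|h(\beta A_B)g\rangle\to\int\overline{\hat f}\hat g\,h(\beta|p|^2)$. Since $h$ is bounded and continuous with a limit at infinity, it suffices to prove the resolvent convergence $\langle f|(A_B+z)^{-1}g\rangle\to\langle f|(-\Delta+z)^{-1}g\rangle$ for $z>0$ and compactly supported $f,g$, uniformly enough to pass to $h$ via Stone--Weierstrass. For the Dirichlet resolvents this is classical. For the difference I would use Krein's resolvent formula, which reads
\begin{equation*}
(A_B+z)^{-1}-(A_{D,j}+z)^{-1}=\gamma_j(z)\bigl(\text{bounded operator on }\mathcal H_j\bigr)\gamma_j(z)^*.
\end{equation*}
Here $\gamma_j(z)$ maps into $\ker(S_j^*+z)$, the solutions of $(-\Delta+z)u=0$ in $\Omega_j$. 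On the harmonic space these are given by $\gamma_j(z)=(1+z A_{D,j}^{-1})^{-1}$, and the middle factor is controlled using $B\ge0$ (uniformly in $B$, via the Weyl function $M(z)$, which is $\le -z$ in a suitable sense). The key estimate is that $\|1_{\mathrm{supp}f}\,\gamma_j(z)\|$ stays bounded while the relevant pairing tends to zero. The reason is that $(-\Delta+z)$-solutions with unit norm concentrate near $\partial\Omega_j$ and decay like $e^{-\sqrt z\,\mathrm{dist}}$ into the interior, by comparison with $\cosh$-type supersolutions and the maximum principle. Making this uniform in the rank $n_j$ and in the size of the $\chi_{k,j}$ near the boundary is the delicate point. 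I would handle it by bounding the middle factor purely through positivity, $\|(B-M(z))^{-1}\|\le \|(-M(z))^{-1}\|$, so that it does not depend on $\Phi$ at all. If needed, $n_j$ can additionally be taken to grow slowly.
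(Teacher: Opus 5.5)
Your construction ($A_B^{-1}=A_{D,j}^{-1}+C_j$), the splitting $(e^{\beta x}-1)^{-1}=(\beta x)^{-1}+h(\beta x)$ and the treatment of the condensate term coincide with the paper's; the truncation at $n_j$ is harmless but unnecessary. The gap is in the step you yourself call the main obstacle, and your plan for it does not close as written.

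Take your normalization $\gamma_j(z)=G_j$ on $\mathcal H_j$, where $G_j:=(1+zA_{D,j}^{-1})^{-1}=A_{D,j}(A_{D,j}+z)^{-1}$. The Weyl function is then $M_j(z)=-z\,P_{\mathcal H_j}G_jP_{\mathcal H_j}$, so $-z\le M_j(z)<0$. This is the opposite of the inequality you invoke. Consequences:
\begin{enumerate}
\item $\|(-M_j(z))^{-1}\|$ is not bounded uniformly in $j$. For balls of radius $L_j$, testing with the constant function shows it grows at least linearly in $L_j$.
\item In general it is only controlled by $1/\lambda_1(\Omega_j)$, which for an arbitrary exhaustion is unrelated to $\mathrm{dist}(\mathrm{supp} f,\partial\Omega_j)$.
\item So multiplying separate norm bounds for $\chi_{\mathrm{supp} f}\gamma_j$, $(B-M_j)^{-1}$ and $\gamma_j^*\chi_{\mathrm{supp} f}$ need not give decay. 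Letting $n_j$ grow slowly does not help, since this difficulty is independent of $\Phi$.
\end{enumerate}
The route can be repaired by keeping the factors together. From $B\ge 0$ and $0\le G_j\le 1$,
\[
\langle f|\gamma_j(B-M_j)^{-1}\gamma_j^*f\rangle\le\frac1z\sup_{h\in\mathcal H_j}\frac{|\langle G_jh|f\rangle|^2}{\langle h|G_jh\rangle}\le\frac1z\sup\Big\{\frac{|\langle u|f\rangle|^2}{\|u\|^2}\,:\,(-\Delta+z)u=0\text{ in }\Omega_j\Big\}.
\]
Your cosh comparison, combined with a local mean-value bound, then gives $\|u\|_{L^2(\mathrm{supp} f)}\le\varepsilon_j\|u\|_{L^2(\Omega_j)}$ with $\varepsilon_j\to0$. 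That argument is not in your proposal, however.

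The paper shows that none of this machinery is needed. Its Theorem~2 proves $(A_j+1)^{-1}\chi_{\Omega_j}u\to(-\Delta+1)^{-1}u$ strongly for \emph{any} nonnegative self-adjoint extensions $A_j$ of $-\Delta$ on $C_0^\infty(\Omega_j)$, by a soft argument:
\begin{enumerate}
\item $v_j=(A_j+1)^{-1}\chi_{\Omega_j}u$ satisfies $\|v_j\|\le\|u\|$.
\item For $\psi\in C_0^\infty(\R^d)$ and large $j$, $\langle\psi|A_jv_j\rangle=\langle-\Delta\psi|v_j\rangle$. Hence every weak limit point $w$ solves $(-\Delta+1)w=u$, so the limit is unique.
\item Strong convergence follows from $\|v_j-\eta w\|\le\|(A_j+1)(v_j-\eta w)\|$ with a cutoff $\eta$, because $\eta w$ is a compactly supported $H^2$ function on which $A_j$ acts as $-\Delta$.
\end{enumerate}
Only positivity and the fact that $A_j$ extends the minimal Laplacian enter. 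So the boundary condition, $\Phi$ and $n_j$ are irrelevant, and no Krein formula or boundary decay estimate is required. The $1/x$ part is then handled by the explicit inverse together with monotonicity of $\langle f|(-\Delta_0^{\Omega}+\varepsilon)^{-1}f\rangle$ in $\varepsilon$ and $\Omega$. This works in every dimension, including $d\in\{1,2\}$, where your Green's-function comparison is not available.

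A separate point: your argument that $\sum_k|\phi_k|^2$ is locally bounded is circular. The sub-mean-value inequality only bounds it by a ball average, which may be infinite. In fact the claim is false under mere pointwise summability. By Runge's theorem there are polynomials $P_n$ on $\C\cong\R^2$ with $|P_n|<2^{-n}$ on $\{|z|\le n,\ \mathrm{dist}(z,[0,\infty))\ge 1/n\}\cup[0,n]$ and $|P_n(\tfrac{i}{2n})|\ge n-1$. Then $\sum_n|P_n(z)|^2<\infty$ for every $z$, yet $\int_{\{|z|<1\}}|P_n|^2\to\infty$. By the closed graph theorem, $\sum_n|\int\overline{f}P_n|^2=\infty$ for some $f\in L^2(\{|z|<1\})$. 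The paper asserts the same uniformity in a remark, so this defect is shared. Both arguments really need $\sum_k|\phi_k|^2\in L^1_{\rm loc}$ (equivalently $L^\infty_{\rm loc}$, by subharmonicity) as a hypothesis.
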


By assumption, the supports of $f$ and $g$ are contained in $\Omega_j$ for $j$ large enough, hence the left-hand side of \eqref{eq:main} is well-defined. For the first term on the right-hand side to be well-defined, we need the additional assumption $\hat f(0) = \hat g(0) = 0$ for $d\in\{1,2\}$ to have integrability at $p=0$, which then follows from smoothness of $\hat f$ and $\hat g$. 

Eq.~\eqref{eq:main} corresponds to the two-point function of an ideal Bose gas with kinetic energy described by the Laplacian $-\Delta_\Phi^{\Omega}$ (see, e.g., \cite[Sect.~5.2.5]{BR} or \cite[Sect.~2.5]{Thirring}). From it, all $n$-point functions can be obtained via Wick's rule. For simplicity, we have set the chemical potential equal to zero, but the same result holds with a small (negative) chemical potential that goes to zero sufficiently fast as $j\to \infty$. 

There are multiple (possibly infinitely many) Bose--Einstein condensates. Each condensate comes with a corresponding density $\beta^{-1} |\phi_k(x)|^2$, whose mean value is infinite unless $\phi_k$ is the constant function, since non-constant harmonic functions necessarily increase at infinity. We remark that since the $\phi_k$ are harmonic, the pointwise summability condition $\sum_k |\phi_k|^2 < \infty$ necessarily holds uniformly on bounded sets.

The Laplacians $\Delta_\Phi^\Omega$ constructed in the next section will be strictly positive, but have very small eigenvalues as $\Omega$ gets large. In fact, from the convergence in \eqref{eq:main} one can read off that one needs to have eigenvalues that decrease as $(\int_\Omega |\phi_k|^2)^{-1}$ as $\Omega\to \R^d$, in order for the product of the inverse of the eigenvalue times the corresponding spectral projection to have a finite limit. 

By choosing a suitable sequence of domains $\Omega_j$, it is possible even in the case of Dirichlet boundary conditions  to have multiple condensates in the sense that more than one eigenvalue of the one-particle reduced density matrix (whose matrix elements are given by the two-point function above) is macroscopic (see \cite{berg1,berg2} and references therein). In the thermodynamic limit, this effect is not seen in the two-point function, however, in contrast to the case considered here.

The remainder of this article contains the proof of Theorem~\ref{thm:main}. In the next section, we shall construct the Laplacian $\Delta_\Phi^\Omega$ and describe its essential properties. In Section~\ref{sec:3}, we shall show that $\Delta_\Phi^{\Omega_j}$  converges to the  Laplacian on $\R^d$ as $j\to \infty$ in a suitable sense. Finally, Section~\ref{sec:4} combines the results of Sections~\ref{sec:2} and~\ref{sec:3} and gives the proof of Theorem~\ref{thm:main}.

\section{Construction of the Laplacians}\label{sec:2}

For nice enough domains $\Omega \subset \R^d$, let $\chi_\Omega$ denote the characteristic function of $\Omega$. The Dirichlet Laplacian on $\Omega$ will be denoted by $\Delta_0^\Omega$. Given a (possibly infinite) sequence of harmonic functions $\Phi = \{\phi_k\}_{k}$ on $\R^d$, with $\sum_k |\phi_k(x)|^2 < \infty$ for all $x\in\R^d$, we shall define the Laplacian $\Delta^\Omega_\Phi$ as 
\begin{equation}\label{def:delta}
-\Delta_\Phi^\Omega := \left( \left( -\Delta_0^\Omega\right)^{-1} + \sum_{k} |\chi_\Omega \phi_k \rangle \langle \phi_k \chi_\Omega| \right)^{-1}
\end{equation}
Note that the inverse of the Dirichlet Laplacian is compact, and so is the second term in parentheses in \eqref{def:delta} due to the summability assumption of the $\phi_k$. 
As the inverse of a compact operator with trivial kernel, this yields a well-defined self-adjoint operator. Moreover, we recover the Dirichlet Laplacian $-\Delta_0^\Omega$ in case $\Phi$ consists only of the zero function. 

Let $P_\Phi^\Omega : L^2(\Omega) \to L^2(\Omega)$ denote the projection onto the closure of the linear span of $\{ \chi_\Omega\phi_k\}_{k}$. Since the $\phi_k$ are assumed to be harmonic on all of $\R^d$, functions in the range of $P_\Phi^\Omega$ are smooth and extend smoothly to the boundary of $\Omega$ (and beyond).  We can introduce a positive (possibly unbounded) self-adjoint operator 
$R_\Phi^\Omega$ on $P_\Phi^\Omega L^2(\Omega)$ such that
\begin{equation}\label{def:R}
\sum_{k} |\chi_\Omega \phi_k \rangle \langle \phi_k \chi_\Omega|  = P_\Phi^\Omega \left( R_\Phi^\Omega\right)^{-1} P_\Phi^\Omega,
\end{equation}
as the left-hand side is positive and compact, and has a trivial kernel in $\ran P^\Omega_\Phi$.
%
%

For functions $f\in H^1(\Omega)$ that have a continuous restriction to the boundary $\partial \Omega$, we shall denote by $\psi_f$ the unique harmonic function that agrees with $f$ on the boundary of $\Omega$. 
The key properties of $-\Delta_\Phi^\Omega$ are summarized in the following lemma.

\begin{lemma}
With the definitions in \eqref{def:delta} and \eqref{def:R},
\begin{enumerate}
\item[(i)] $-\Delta_\Phi^\Omega$ is a positive self-adjoint extension of $-\Delta$ on $C_0^\infty(\Omega)$
\item[(ii)] The domain of $-\Delta_\Phi^\Omega$ is given by 
\begin{align}\nonumber
\dom(\Delta_\Phi^\Omega) = \big\{  f + g \, : \, &f \in H^2(\Omega)\cap H^1_0(\Omega), \ g \in \dom(R_\Phi^\Omega) \subset \ran P_\Phi^\Omega , \\ & R_\Phi^\Omega g =  P_\Phi^\Omega (-\Delta f) \big\}
\label{def:dom}
\end{align}
on which $-\Delta_\Phi^\Omega$ acts as $-\Delta_\Phi^\Omega(f+g) = - \Delta f$. 
%
\item[(iii)] $-\Delta_\Phi^\Omega$ is the operator associated to the quadratic form
$$
f\mapsto  \int_{\Omega} |\nabla f |^2 - \int_{\Omega} |\nabla \psi_f |^2  + \langle \psi_f | R_\Phi^\Omega \psi_f\rangle_{L^2(\Omega)}
$$
with form domain $\{ h + g \, : \, h \in H_0^1(\Omega), \ g \in \ran P_\Phi^\Omega, \  \langle \psi_g | R_\Phi^\Omega \psi_g\rangle_{L^2(\Omega)} < \infty\}$. 
\end{enumerate}
\end{lemma}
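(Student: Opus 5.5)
The whole lemma should follow from writing $K := (-\Delta_0^\Omega)^{-1} + P(R)^{-1}P$, with $P=P_\Phi^\Omega$ and $R=R_\Phi^\Omega$, so that $-\Delta_\Phi^\Omega = K^{-1}$. Then $\dom(\Delta_\Phi^\Omega)=\ran K$, and $-\Delta_\Phi^\Omega(Ku)=u$. Positivity and self-adjointness are immediate: $K$ is bounded, positive and self-adjoint with trivial kernel, since already $\langle u|(-\Delta_0^\Omega)^{-1}u\rangle>0$ for $u\neq 0$. So $K^{-1}$ is a positive self-adjoint operator. All three parts will be read off from this representation.

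\emph{Part (ii), then (i).} Take $u\in L^2(\Omega)$ and set $f=(-\Delta_0^\Omega)^{-1}u\in H^2\cap H^1_0$ and $g=PR^{-1}Pu$. The statement implicitly uses $R^{-1}$ for $P(R)^{-1}P$, i.e. $g=R^{-1}Pu$. Then $g\in\dom R\subset\ran P$ and $Rg=Pu=P(-\Delta f)$, which gives $Ku=f+g$ with $-\Delta_\Phi^\Omega(f+g)=u=-\Delta f$.

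Conversely, suppose $f+g$ is in the set \eqref{def:dom}. Put $u=-\Delta f$. Then $(-\Delta_0^\Omega)^{-1}u=f$, and $R^{-1}Pu=R^{-1}Rg=g$, so $K u=f+g$. This proves (ii).

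The decomposition $f+g$ is unique. Indeed, $f+g=0$ forces $g\in H^1_0$. Since $g$ is harmonic, it vanishes on $\partial\Omega$, and by the maximum principle $g=0$. Hence the action $-\Delta(f+g)=-\Delta f$ is consistent.

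For (i), let $f\in C_0^\infty(\Omega)$ and choose $g:=R^{-1}P(-\Delta f)$. Then $f+g$ lies in the domain. I need $g=0$, i.e. $P(-\Delta f)=0$. This holds by Green's identity: $\langle\phi_k\chi_\Omega|\Delta f\rangle=\int_\Omega \overline{\Delta\phi_k}f=0$, with no boundary terms because $f$ has compact support. Hence $C_0^\infty\subset\dom$ and $-\Delta_\Phi^\Omega f=-\Delta f$.

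\emph{Part (iii).} For the form I plan to use the standard fact that the form of $K^{-1}$ has domain $\ran K^{1/2}$ and value $\|v\|^2$ at $K^{1/2}v$. It is more convenient to verify directly that the stated form $q$ is closed and positive, and that for $w=f+g\in\dom(\Delta_\Phi^\Omega)$ and all $h$ in the form domain, $q(h,w)=\langle h|-\Delta f\rangle$. Then $q$ is the form of $K^{-1}$, by uniqueness of the operator representing a closed form with core $\dom(\Delta^\Omega_\Phi)$.

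For $h=h_0+h_1$ with $h_0\in H^1_0$ and $h_1\in\ran P$, we have $\psi_h=h_1$. The quantity $\int|\nabla h|^2-\int|\nabla\psi_h|^2=\int|\nabla h_0|^2$ is the Dirichlet-principle orthogonality $\int\nabla \bar h_0\cdot\nabla h_1=0$. So
\[
q(h)=\|\nabla h_0\|^2+\|R^{1/2}h_1\|^2,
\]
a sum of two closed forms on the direct sum $H^1_0\oplus\ran P$. By the unique-decomposition argument this sum is direct, so $q$ is closed and positive.

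The bilinear identity is then
\[
q(h,w)=\langle \nabla h_0|\nabla f\rangle+\langle R^{1/2}h_1|R^{1/2}g\rangle=\langle h_0|-\Delta f\rangle+\langle h_1|P(-\Delta f)\rangle ,
\]
and since $h_1\in\ran P$ this equals $\langle h|-\Delta f\rangle$.

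\emph{Main obstacle.} The delicate point is that $\ran P$ may be infinite-dimensional, and elements of $\ran P$ or $\dom R^{1/2}$ might not obviously have continuous boundary values. The claim that the decomposition is direct and closed then needs care. I would handle this by using that $L^2$-limits of harmonic functions are harmonic, via local uniform convergence from the mean value property, together with the pointwise summability of $\sum_k|\phi_k|^2$ uniformly on the compact set $\bar\Omega$. Together these give that $\ran P$ consists of functions harmonic on a neighbourhood of $\bar\Omega$. The stated smoothness up to the boundary is also something I would simply assume from the paper's remark.
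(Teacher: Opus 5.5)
Your proposal is correct. Parts (i) and (ii) go through $K=(-\Delta_0^\Omega)^{-1}+P(R)^{-1}P$ essentially as the paper does. You additionally prove that the set \eqref{def:dom} lies in $\ran K$ and that the splitting $f+g$ is unique, both of which the paper leaves implicit.

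Part (iii) follows a different route. The paper integrates by parts using the boundary condition \eqref{bc}, via the Dirichlet-to-Neumann map and $r_\Phi^\Omega$. It evaluates $\langle f|(-\Delta_\Phi^\Omega)f\rangle$ on the operator domain and reads off the form. You avoid boundary terms entirely. The Dirichlet-orthogonality of $H^1_0$ and harmonic functions decouples the form as $\|\nabla h_0\|^2+\|R^{1/2}h_1\|^2$, and you then identify the operator through the representation theorem. This actually establishes closedness and the stated form domain, which the paper's computation on $\dom(\Delta_\Phi^\Omega)$ alone does not. The paper's route, in exchange, exhibits the domain condition as a genuine boundary condition.

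A few points need tightening.

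\emph{Closedness.} Closedness of $q$ does not follow from the algebraic directness of the sum. It follows because a $q$-Cauchy sequence has $h_0^{(n)}$ Cauchy in $H^1_0$ (by Poincar\'e), hence $h_1^{(n)}$ Cauchy in $L^2$, and $R^{1/2}$ is closed.

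\emph{Identification.} No core argument is needed at the end. What you show is that the operator associated with $q$ extends the self-adjoint operator $K^{-1}$, so the two coincide.

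\emph{Harmonicity of $\ran P$.} This is the substantive point. The claim in your last paragraph that all of $\ran P$ is harmonic near $\bar\Omega$ is false for infinite $\Phi$; the paper's remark to that effect is only accurate for finite $\Phi$. Take $d=2$, $\Omega$ the unit disk and $\phi_k=e^{-k^2}\mathrm{Re}\,z^k$. The closed span then contains $\mathrm{Re}\sum_k k^{-1/4}z^k$. This function lies in $L^2(\Omega)$ but is unbounded as $z=x\to 1^-$ and does not lie in $H^1(\Omega)$. $L^2$-limits give harmonicity only inside $\Omega$.

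What you actually need concerns only $\dom R^{1/2}\supset \dom R$. Write $\sum_k|\chi_\Omega\phi_k\rangle\langle\phi_k\chi_\Omega|=TT^*$ with $Tc=\sum_k c_k\chi_\Omega\phi_k$ for $c\in\ell^2$. Then $\dom R^{1/2}=\ran (TT^*)^{1/2}=\ran T$. Your Cauchy--Schwarz argument with locally uniform summability shows that elements of $\ran T$ are restrictions of harmonic functions on all of $\R^d$. So the operator and form domains only involve such functions, and this justifies $\psi_h=h_1$ and the orthogonality step. Uniqueness of the splitting even holds on all of $\ran P$ without boundary regularity: a harmonic $g\in H^1_0(\Omega)$ satisfies $\int|\nabla g|^2=0$, so $g=0$.
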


Functions  $f\in \dom(\Delta_\Phi^\Omega)$ have a smooth restriction to $\partial \Omega$, and $\psi_f \in \ran P_\Phi^\Omega$. Moreover, an integration by parts shows that $P_\Phi^\Omega (-\Delta f)$ only depends on $f$ and its outward normal derivative, $\partial f$, on the boundary of $\Omega$. 
That is, as $P_\Phi^\Omega (-\Delta f)$ is in the linear span of $\{\chi_\Omega\phi_k\}$, it suffices to know 
\begin{equation}\label{eq:int_by_parts}
    \langle\phi_k|-\Delta f\rangle_{L^2(\Omega)} = \int_{\partial\Omega} \left( -\phi_k \partial f +  f \partial\phi_k \right) \, .
\end{equation}
Hence the last condition in \eqref{def:dom}
is really a boundary condition. In fact, if we define an operator $r_\Phi^\Omega$ on $L^2(\partial \Omega)$ via 
$$
\langle \psi_f | R_\Phi^\Omega \psi_f\rangle_{L^2(\Omega)} = \langle f\! \!\restriction_{\partial\Omega} | r_\Phi^\Omega \, f\!\! \restriction_{\partial\Omega} \rangle_{L^2(\partial\Omega)} 
$$
for $\psi_f \in \ran P_\Phi^\Omega$, 
the boundary condition reads
\begin{equation}\label{bc}
\partial f 
= H^\Omega f\!\! \restriction_{\partial\Omega} - r_\Phi^\Omega\, f\!\!\restriction_{\partial\Omega}
\end{equation}
where $H^\Omega: L^2(\partial \Omega) \to L^2(\partial \Omega)$ is the Dirichlet-to-Neumann operator, mapping $f\!\!\restriction_{\partial\Omega}$ to $\partial \psi_f$. 

\begin{proof}
(i) Let us first check that $\Delta_\Phi^\Omega$ is indeed a Laplacian, i.e., $\Delta_\Phi^\Omega f = \Delta f$ for $f \in C_0^\infty(\Omega)$. By the definitions~\eqref{def:delta} and~\eqref{def:R} we have 
$$
\left(-\Delta_\Phi^{\Omega}\right)^{-1} \left(-\Delta_0^\Omega\right) f = f + P_\Phi^\Omega \left( R_\Phi^\Omega\right)^{-1} P_\Phi^\Omega \left(-\Delta_0^\Omega\right) f = f
$$ 
since $P_\Phi^\Omega \Delta_0^\Omega f = 0$, as the integration by parts in \eqref{eq:int_by_parts} shows for  $f$ and $\partial f$ vanishing on $\partial \Omega$. Hence $(-\Delta_\Phi^\Omega) f = (-\Delta_0^\Omega) f= -\Delta f$. Positivity and self-adjointness of $-\Delta_\Phi^\Omega$ follow straightforwardly from the definition. 

(ii) Next we compute the domain of $\Delta_\Phi^\Omega$. For $f\in \dom(\Delta^\Omega_\Phi)$, let $g = (-\Delta_\Phi^\Omega) f$. Again by the definitions~\eqref{def:delta} and~\eqref{def:R}, 
$$
f = (-\Delta_\Phi^\Omega)^{-1} g = \underbrace{ (-\Delta_0^\Omega)^{-1} g}_{f - \psi_f} + \underbrace{P_\Phi^\Omega \left(R_\Phi^\Omega\right)^{-1} P_\Phi^\Omega g}_{\psi_f} 
$$
Here we used the fact that  $(-\Delta_0^\Omega)^{-1} g \in H^2(\Omega) \cap H_0^1(\Omega)$, the domain of the Dirichlet Laplacian, and thus vanishes at the boundary of $\Omega$, while the last term $P_\Phi^\Omega (R_\Phi^\Omega)^{-1} P_\Phi^\Omega g$ is harmonic and smooth up to the boundary of $\Omega$. 
Hence $\psi_f \in \ran P_\Phi^\Omega$, and  $R_\Phi^\Omega \psi_f = P_\Phi^\Omega g = P_\Phi^\Omega (-\Delta f)$. 

(iii)
To obtain the quadratic form associated with $-\Delta_\Phi^\Omega$, an integration by parts using the boundary condition \eqref{bc} yields 
$$
\langle f | (-\Delta_\Phi^\Omega) f \rangle = \int_{\Omega} |\nabla f |^2 - \int_{\partial \Omega} \overline{f}  \partial f =  \int_{\Omega} |\nabla f |^2 - \int_{\Omega} |\nabla \psi_f |^2  + \langle \psi_f | R_\Phi^\Omega \psi_f\rangle_{L^2(\Omega)}
$$
together with the constraint $\psi_f \in \ran P_\Phi^\Omega$. 
\end{proof}

If, for given $\Omega$, one allows for harmonic functions that are merely harmonic on $\Omega$ instead of all of $\R^d$, one can show that all non-negative self-adjoint extensions of $-\Delta$ on $C_0^\infty(\Omega)$ can be obtained in the way above, plus suitably taking limits \cite{AS,B1,B2,B3}. In fact, any non-negative self-adjoint extension is bounded above by $-\Delta_0^\Omega$, and bounded below by the Krein Laplacian $Q^\Omega (Q^\Omega (-\Delta_0^\Omega)^{-1} Q^\Omega)^{-1} Q^\Omega$ with $1-Q^\Omega$ the projection onto the span of all harmonic functions on $\Omega$. 

For completeness, let us also note the general resolvent formula 
$$
\frac{1}{z+ \Delta_\Phi^\Omega} = \frac 1{z+ \Delta_0^\Omega} - \frac {\Delta_0^\Omega}{z+ \Delta_0^\Omega}  P_\Phi^\Omega \frac { 1}{ R_\Phi^\Omega - z P_\Phi^\Omega \frac {\Delta_0^\Omega}{z+ \Delta_0^\Omega} P_\Phi^\Omega } P_\Phi^\Omega \frac {\Delta_0^\Omega}{z+ \Delta_0^\Omega}
$$
which follows from \eqref{def:delta} and \eqref{def:R} and reduces to it for $z=0$. We shall not need it in the following, however. Formulas of this type are often called Krein resolvent formulas, following Krein's work on the difference of resolvents of different self-adjoint extensions \cite{krein}.

\begin{example}
As a specific example consider the case of $\Phi = \{ \phi\}$, i.e.,  there is just one harmonic function, denoted by $\phi$. In this case, we consider a Laplacian with domain given by functions $f = h + c\phi$, with $c\in \C$ and $h \in H^2(\Omega) \cap H_0^1(\Omega)$ satisfying Dirichlet boundary conditions. We have $\psi_f = c \phi$, and the boundary condition \eqref{bc} becomes 
$$
 f\!\!\restriction_{\partial\Omega} = c  \phi\!\!\restriction_{\partial\Omega}
\ , \ 
\partial f = c \partial \phi - r_\phi^\Omega\, f\!\!\restriction_{\partial\Omega}
$$ 
with $r_\phi^\Omega = 1 / \int_{\partial \Omega} |\phi|^2$. 
\end{example}

\section{Strong resolvent convergence}\label{sec:3}

In this section, we shall show that as $\Omega_j\to \R^d$ in the sense defined in Theorem~\ref{thm:main}, $-\Delta_\Phi^{\Omega_j}$ converges  to the usual Laplacian $-\Delta$ on $\R^d$ as $j\to \infty$ in a suitable sense. 
In fact, by proceeding similarly as in \cite[Thm.~1.2]{RT}, where a corresponding theorem for the Dirichlet Laplacian is given, we shall prove the following more general statement.

\begin{theorem}\label{thm:srs}
Consider a sequence of open, bounded domains $\Omega_j \subset \R^d$, converging to $\R^d$ as $j\to \infty$, in the sense that any bounded set is contained in $\Omega_j$ for $j$ large enough. Let $-\Delta^{\Omega_j}$ be any sequence of non-negative self-adjoint extensions of $-\Delta$ on $C^\infty_0(\Omega_j)$. If $F$ is a bounded continuous function on $[0,\infty)$, then  the operator $F(-\Delta^{\Omega_j})$ on $L^2(\Omega_j)$ composed with the restriction to $\Omega_j$ converges strongly to $F(-\Delta)$ on $L^2(\R^d)$, i.e. $F(-\Delta^{\Omega_j} ) \chi_{\Omega_j}  u \to F(-\Delta) u$ strongly in $L^2(\R^d)$ for all $u\in L^2(\R^d)$.  
\end{theorem}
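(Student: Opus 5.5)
The plan is to reduce to resolvents and then use the fact that $-\Delta$ on $C_0^\infty(\R^d)$ is essentially self-adjoint. We view $L^2(\Omega_j)$ as a subspace of $L^2(\R^d)$ via extension by zero, and write $A_j := -\Delta^{\Omega_j}\oplus 0$ on $L^2(\R^d) = L^2(\Omega_j)\oplus L^2(\R^d\setminus\Omega_j)$. Then $F(-\Delta^{\Omega_j})\chi_{\Omega_j} u$ equals $F(A_j)u - F(0)\chi_{\R^d\setminus\Omega_j}u$. The second term tends to zero in $L^2$ by dominated convergence, since $\Omega_j$ exhausts $\R^d$. It therefore suffices to prove $F(A_j)\to F(-\Delta)$ strongly.

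\textbf{Step 1 (resolvent convergence).} We first show strong convergence of $(A_j+1)^{-1}$ to $(-\Delta+1)^{-1}$, which is just the first step of the proof of \cite[Thm.~1.2]{RT}. Fix $v\in C_0^\infty(\R^d)$. For $j$ large, $\operatorname{supp} v\subset\Omega_j$, and since $-\Delta^{\Omega_j}$ extends $-\Delta$ on $C_0^\infty(\Omega_j)$, we have $(A_j+1)v = (-\Delta+1)v$. With $w=(-\Delta+1)v$ this gives
\[
(A_j+1)^{-1}w - (-\Delta+1)^{-1}w = v - v = 0
\]
for all large $j$. The operators $(A_j+1)^{-1}$ are uniformly bounded by $1$, because $A_j\ge 0$. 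Moreover, the set $\{(-\Delta+1)v : v\in C_0^\infty(\R^d)\}$ is dense in $L^2(\R^d)$, since $C_0^\infty$ is a core for the self-adjoint Laplacian on $\R^d$. An $\epsilon/3$ argument then yields strong convergence on all of $L^2(\R^d)$.

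\textbf{Step 2 (functional calculus).} This step passes from resolvents to general bounded continuous $F$ and is where some care is needed; I expect it to be the main point to check. Strong resolvent convergence at the single point $z=-1$ gives strong convergence $G(A_j)\to G(-\Delta)$ for every polynomial $G(\lambda)=P((\lambda+1)^{-1})$, using that products of uniformly bounded strongly convergent sequences converge strongly. By Stone--Weierstrass, these polynomials are uniformly dense in $C_0([0,\infty))$ together with the constants, that is, in continuous functions with a limit at infinity. Since $\|G(A_j)\|\le\sup|G|$ uniformly in $j$, strong convergence extends to all such $G$.

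For merely bounded continuous $F$ without a limit at infinity, I would use a cutoff, as in \cite[Thm.~VIII.20]{RS} (Reed--Simon). Let $h_n\in C_0([0,\infty))$ with $0\le h_n\le 1$ and $h_n=1$ on $[0,n]$. Then
\[
F(A_j)u = (Fh_n)(A_j)u + F(A_j)(1-h_n)(A_j)u .
\]
The second term is bounded by $\sup|F|\,\|(1-h_n)(A_j)u\|$. Since $1-h_n$ has a limit at infinity, the previous argument gives $\|(1-h_n)(A_j)u\|\to\|(1-h_n)(-\Delta)u\|$ as $j\to\infty$, and this limit is small for $n$ large by the spectral theorem. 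Meanwhile $(Fh_n)(A_j)u\to (Fh_n)(-\Delta)u$ because $Fh_n\in C_0$. Combining the two estimates and then subtracting the $F(0)$ correction term from the reduction above proves the claim.
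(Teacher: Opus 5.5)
Your proof is correct, but its central step differs from the paper's.

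For the resolvent convergence, the paper follows Rauch--Taylor. It extracts a weak limit $w$ of $v_j=(-\Delta^{\Omega_j}+1)^{-1}\chi_{\Omega_j}u$. It identifies $w=(-\Delta+1)^{-1}u$ by moving $\Delta^{\Omega_j}$ onto test functions $\psi\in C_0^\infty(\R^d)$ and using uniqueness of $L^2$ solutions of $-\Delta w+w=u$. It then upgrades to strong convergence through the localization bound $\|v_j-\eta w\|\le\|(1-\Delta^{\Omega_j})(v_j-\eta w)\|$ with a cutoff $\eta$. That step uses $w\in H^1(\R^d)$ and, implicitly, $\eta w\in\dom(\Delta^{\Omega_j})$.

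You instead run the core argument. Each $v\in C_0^\infty(\R^d)$ eventually lies in $\dom(A_j)$ with $A_jv=-\Delta v$. So the resolvents eventually agree on the dense set $(-\Delta+1)C_0^\infty(\R^d)$, and the uniform bound $\|(A_j+1)^{-1}\|\le 1$ does the rest.

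Both arguments rest on essential self-adjointness of $-\Delta$ on $C_0^\infty(\R^d)$. You use it as density of the range of $(-\Delta+1)\restriction C_0^\infty(\R^d)$; the paper uses it as absence of nonzero $L^2$ solutions of $(-\Delta+1)h=0$. Your version needs no compactness, no subsequences, no regularity of the limit and no commutator computation. The paper's weak-limit-plus-localization scheme is the kind that still works in domain-perturbation problems where a fixed test function need not eventually belong to the approximating domains. Under the extension hypothesis of this theorem, your argument is the more economical one.

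Your Step 2 also spells out what the paper only sketches as ``a further approximation argument''. The direct sum $A_j=-\Delta^{\Omega_j}\oplus 0$ makes the functional calculus unital, at the cost of the vanishing term $F(0)\chi_{\R^d\setminus\Omega_j}u$. The cutoffs $h_n$ then handle bounded continuous $F$ that have no limit at infinity.
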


\begin{proof}
By using the Stone--Weierstrass theorem, we can argue as in the proof of \cite[Thm.~1.2]{RT} to conclude that it suffices to prove the statement for the function $F(x) = (1+x)^{-1}$ as it separates points on $[0,\infty)$ and vanishes nowhere. Therefore, it generates the algebra of bounded continuous functions vanishing at infinity. A further approximation argument then allows for the extension to all bounded continuous functions.

Take $u \in L^2(\R^d)$, and let $v_j = (-\Delta^{\Omega_j}+1)^{-1}\chi_{\Omega_j} u $. Since $-\Delta^{\Omega_j}$ is non-negative, $v_j \in L^2(\R^d)$ with $ \| v_j\|_{L^2(\R^d)  }=\| v_j\|_{L^2(\Omega_j)} \leq \|u\|_{L^2(\R^d)}$.
    By selecting a subsequence, we may hence assume that there is a weak $L^2$ limit $v_j\rightharpoonup w$. 
 
    For $\psi \in C^\infty_0(\R^d)$ we have  for sufficiently large $j$
    $$
    \langle\psi| -\Delta^{\Omega_j} v_j \rangle_{L^2(\Omega_j)} = \langle -\Delta^{\Omega_j} \psi|v_j\rangle_{L^2(\Omega_j)} = \langle -\Delta \psi|v_j\rangle_{L^2(\R^d)} \xrightarrow{j\to \infty} \langle -\Delta\psi|w \rangle_{L^2(\R^d)}   
    $$
    where we used that $\Delta^{\Omega_j}$ is a  self-adjoint extension of $\Delta$ on $C^\infty_0(\Omega_j)$. On the other hand,
    $$
    \langle\psi| -\Delta^{\Omega_j} v_j \rangle_{L^2(\Omega_j)} =\langle\psi| \chi^{\Omega_j} u \rangle_{L^2(\Omega_j)} - \langle\psi|  v_j \rangle_{L^2(\Omega_j)} \xrightarrow{j\to \infty} \langle \psi|  u\rangle_{L^2(\R^d)}  - \langle \psi|w \rangle_{L^2(\R^d)}  
    $$
  and hence we conclude that $-\Delta w + w = u$ in the sense of distributions. Since $w\in L^2(\R^d)$, this implies that $w=(-\Delta + 1)^{-1} u$. 
    
 We are left with showing strong convergence. For $\eta \in \C_0^\infty(\R^d)$, we can bound $\| v_j - w \|_{L^2(\R^d)}  \leq \| v_j - \eta w\|_{L^2(\R^d)} + \| (1-\eta) w \|_{L^2(\R^d)}$, and the last term can be made arbitrarily small by choosing $\eta$ to be equal to $1$ on a large enough ball. For $j$ large enough, the support of $\eta$ is contained in $\Omega_j$, and hence 
 $$
 \| v_j - \eta w\|_{L^2(\R^d)} = \| v_j - \eta w\|_{L^2(\Omega_j)} \leq  \| ( 1 - \Delta^{\Omega_j}) (v_j - \eta w)\|_{L^2(\Omega_j)}
 $$
 using again positivity of $-\Delta^{\Omega_j}$. By the definition of $v_j$ and using $(-\Delta + 1) w = u$,
 $$
 \left( 1 - \Delta^{\Omega_j}\right) (v_j - \eta w) = \left( \chi_{\Omega_j} -\eta \right) u  + 2 \nabla \eta \cdot \nabla w + \left( \Delta \eta\right) w
 $$
 and this converges strongly to $(1-\eta) u + 2 \nabla\eta\cdot \nabla w + (\Delta \eta) w$ as $j\to \infty$. Since $w\in H^1(\R^d)$, this term can again be made arbitrarily small by a suitable choice of $\eta$. This completes the proof.
\end{proof}

\section{Proof of Theorem~\ref{thm:main}}\label{sec:4}

In this section we shall give the proof of Theorem~\ref{thm:main}. By the polarization identity, we can assume $f=g$ in \eqref{eq:main} without loss of generality. 

Pick $j$ large enough such that $\Omega_j$ contains the support of $f$. With $$F(x) = 1/(e^x-1) - 1/x\,,$$ we can write 
$$
\left\langle f \left| \frac 1{e^{-\beta \Delta_\Phi^{\Omega_j}} - 1} \right. f\right\rangle_{L^2(\Omega_j)} = \left\langle f \left| F(-\beta \Delta_\Phi^{\Omega_j}) \right. f\right\rangle_{L^2(\Omega_j)} + \beta^{-1} \left\langle f \left|  \left(-\Delta_\Phi^{\Omega_j}\right)^{-1} \right. f\right\rangle_{L^2(\Omega_j)}
$$
Since $F$ is a bounded continuous function on $[0,\infty)$, we deduce from Theorem~\ref{thm:srs} that 
$$
\lim_{j\to \infty} \left\langle f \left| F(-\beta \Delta_\Phi^{\Omega_j}) \right. f\right\rangle_{L^2(\Omega_j)} = \left\langle f \left| F(-\beta \Delta) \right. f\right\rangle_{L^2(\R^d)} = \int_{\R^d} F(\beta |p|^2) |\hat f(p)|^2  {dp}
$$ 
From the definition of $\Delta_\Phi^{\Omega}$ in \eqref{def:delta}, we further have 
$$
\left\langle f \left|  \left(-\Delta_\Phi^{\Omega_j}\right)^{-1} \right. f\right\rangle_{L^2(\Omega_j)} = \left\langle f \left|  \left(-\Delta_0^{\Omega_j}\right)^{-1} \right. f\right\rangle_{L^2(\Omega_j)} + \sum_k \left| \int \overline{\phi_k} f \right|^2 
$$
The statement of the theorem thus follows if we can show that 
$$
\lim_{j\to \infty} \left\langle f \left|  \left(-\Delta_0^{\Omega_j}\right)^{-1} \right. f\right\rangle_{L^2(\Omega_j)} = \left\langle f \left|  \left(-\Delta\right)^{-1} \right. f\right\rangle_{L^2(\R^d)} = \int_{\R^d} \frac{ |\hat f(p)|^2}{|p|^2} dp 
$$
But this is a consequence of Theorem~\ref{thm:srs} and the monotone convergence theorem, using that $\langle f | (-\Delta_0^{\Omega} + \eps)^{-1} f\rangle_{L^2(\Omega)}$ is increasing as $\eps\to 0^+$ and  $\Omega \to \R^d$, i.e., for all $f$ supported inside $\Omega$, $\langle f | (-\Delta_0^{\Omega} + \eps)^{-1} f\rangle_{L^2(\Omega)} \leq \langle f | (-\Delta_0^{\Omega'} + \eps')^{-1} f\rangle_{L^2(\Omega')}$ if $\eps \geq \eps'>0$ and $\Omega \subset \Omega' \subset \R^d$. This completes the proof. 

\bigskip
{\it Acknowledgments.} We are grateful to Rupert Frank and Jakob Yngvason for  helpful discussions and suggestions.


\begin{thebibliography}{19}

\bibitem{AS} A. Alonso, B. Simon, {\it The Birman--Kre{\u\i}n--Vishik theory of self-adjoint extensions of semibounded operators}, J. Operator Theory {\bf 4}, 251--270 (1980)

\bibitem{berg1} M. van den Berg, J.T. Lewis, {\it On generalized condensation in the free boson gas}, Physica {\bf 110A}, 550--564 (1982)

\bibitem{berg2} M. van den Berg, J.T. Lewis, J.V. Pul\'e, {\it A general theory of Bose--Einstein condensation}, Helv. Phys. Acta {\bf 59}, 1271--1288 (1986)

\bibitem{B1} M.S. Birman, {\it On the self-adjoint extensions of positive definite operators (Russian)}, Math. Sb. {\bf 38}, 431--450 (1956)

\bibitem{BR} O. Bratteli, D.W. Robinson, {\it Operator Algebras and Quantum Statistical Mechanics 2. Equilibrium States. Models in Quantum Statistical Mechanics}, $2^{\rm nd}$ ed., Springer (1997) 

\bibitem{BY}  D. Buchholz, J. Yngvason, {\it Many-body physics and resolvent algebras}, 
J. Math. Phys. {\bf 66}, 051902 (2025)

\bibitem{krein} M.G. Krein, {\it Concerning the resolvents of an Hermitian operator with the deficiency-index $(m, m)$}, Comptes Rendue (Doklady) Acad. Sci. URSS (NS) {\bf 52}, 651--654 (1946)

\bibitem{B2} M. Krein, {\it The theory of self-adjoint extensions of semibounded Hermitian transformations and its applications. I}, Rec. Math. (Math. Sb.) {\bf 20} (62), 431--495 (1947)

\bibitem{RT} J. Rauch, M. Taylor, {\it Potential and Scattering Theory on Wildly Perturbed Domains}, J. Funct. Anal. {\bf 18}, 27--59 (1975)

\bibitem{Thirring} W. Thirring, {\it Quantum Mathematical Physics}, $2$nd ed., Springer (2002)

\bibitem{B3} M. Vishik, {\it On general boundary conditions for elliptic differential equations (Russian)}, Trudy Moskov. Mat. Obsc. {\bf 1}, 187--246 (1952)

\end{thebibliography}
\end{document}